\newtheorem{Theorem}{Theorem}
\newtheorem{Lemma}{Lemma}
\newtheorem{Remark}{Remark}
\newproof{pot1}{Proof of Theorem \ref{the1}}
\newproof{pot2}{Proof of Theorem \ref{the2}}
\newcommand{\reftab}[1]{Table \ref{#1}}
\newcommand{\reffig}[1]{Figure \ref{#1}}
\newcommand{\reflem}[1]{Lemma \ref{#1}}
\begin{document}
\title{A closed-form approximation for pricing geometric Istanbul options}
\author[uc1]{Mohamed Amine Kacef \corref{cor}}
\ead{kacefmohamedamine@gmail.com}
\author[uc1]{Kamal Boukhetala }
\ead{kboukhetala@usthb.dz}
\address[uc1]{Department of Probability and Statistics, Faculty of Mathematics, University of Sciences and Technology, Houari Boumediene USTHB, BP 32, El-Alia, Bab Ezzouar 16111, Algiers, Algeria.}
\cortext[cor]{Corresponding author}

\begin{abstract}
The Istanbul options were first introduced by Michel Jacques in 1997. These derivatives are considered as an extension of the Asian options. In this paper, we propose an analytical approximation formula for a geometric Istanbul call option (GIC) under the Black-Scholes model. Our approximate pricing formula is obtained in closed-form using a second-order Taylor expansion. We compare our theoretical results with those of Monte-Carlo simulations using the control variates method. Finally, we study the effects of changes in the price of the underlying asset on the value of GIC.
\end{abstract}
\begin{keyword}
Options pricing\sep Istanbul options\sep  Geometric average \sep  First hitting time \sep Taylor approximation \sep Control variates 
\end{keyword}
\maketitle
\section{Introduction}
The Istanbul option (IO) is an exotic option whose payoff depends on whether the price of the underlying asset has reached or not a certain threshold previously fixed named $\textit{barrier}$. If this barrier is reached before maturity, an Asian option (AO) is activated and the average is calculated from the first moment when the price of the underlying asset reaches the barrier until the maturity. However, if the barrier is not reached, a standard European option (EO) is activated at maturity. The IO can therefore be seen as an hybrid option that has the characteristics of both AO and EO. This option is also similar to the AO with barrier studied by \citet{KK} and \citet{6327776}, the main difference being that in IO the calculation of the average is activated from the first hitting time of the barrier and not from the acquisition date of the contract. 

If we consider a \citet{BlackScholes11} model, the valuation of products such as the arithmetic Asian options (AAOs) becomes very difficult since the hypothesis of taking the underlying asset price is a geometric Brownian motion does not allow to obtain a closed-form pricing formula because the distribution of the sum of log-normal random variables is not known in theory. However, the price of AAO can be approximated in practice by Monte-Carlo (MC) simulations with variance reduction techniques (see \citet{Zhang_pricingasian}, \citet{CC} and \citet{LU2019}). It is also possible to approach the price of AAO with a Taylor expansion as in \citet{JU}.  

For the geometric Asian option (GAO), the pricing formula is known in closed-form (see \citet{Kemna} for the call and \citet{Angus} for more examples of payoffs). Recently, the GAOs with barrier have been studied by \citet{Aimi3} and \citet{Aimi2}. The price of this type of options has no closed-form expression and is increasingly the subject of financial research. The options involving a geometric average are also studied in the context of stochastic volatility (for examples, see \citet{doi:10.1088/1469-7688/4/3/006} and \citet{HUBALEK20113355}).

In \citet{MichelJac1}, the arithmetic Istanbul call option (AIC) is study in continuous and discrete time trading. The price of AIC is obtained through a log-normal approximation with the moment-matching method (for more details on this approach, see \citet{LEVY1992474}). In this article, we focus our attention on the pricing problem of the geometric Istanbul option (GIO) in continuous time trading. We consider only the case of a call option with an up-barrier and a fixed strike price. We also suppose that the terms of the contract do not guarantee any payment of dividend or rebate at maturity.

This article is organized as follows. In section \ref{section2}, we describe the continuous-time economic model chosen for our study and its theoretical properties. In section \ref{section3}, we show with the strong Markov property that the price formula of GIC can be written in semi-closed form. Then, we propose an analytic approximation formula of this price using a second-order Taylor expansion. In section \ref{section4}, we compare our theoretical results with those of MC simulations using the control variates (CV) method to reduce the variance of MC estimator. We also compare the price of GIC with AIC, and analyze the price sensitivity of GIC to changes in the price of the underlying asset. Finally, in section \ref{section5}, we conclude with a summary of the main results obtained in this article.
\section{Financial model description}\label{section2}
We consider a standard Black and Scholes model of frictionless markets where there is no arbitrage opportunity, the risk-free interest rate $r$ and volatility $\sigma>0$ are constant. The underlying stock price  $S_{t}$ follows a geometric Brownian motion 
\begin{equation}
{S_t} = {S_0}\exp \left( {\overline{\mu} t + \sigma W_t} \right), \ \     t\in [0,T],
\label{GBM}
\end{equation}
where $[0,T]$ is the trading period, ${S_0>0}$ is the initial stock price, $\overline{\mu}= r-\sigma^2/2$ is the risk-neutral drift rate and $W_t$ a one-dimensional standard Brownian motion under the risk-neutral probability $\overline{\mathbb{P}}$. 

In this article, the constant $B \left(>S_0\right)$ is an up-barrier fixed in the terms of the contract. The first hitting time of $B$ by the process $S_t$ is a random variable noted $\tau_{B}^{S}$ and defined as
\begin{equation}
\tau_{B}^{S}\equiv \inf \{t\geqslant 0, S_t\geqslant B\}.
\label{fpt}
\end{equation}
We also use the following notations :\\
$\bullet$ $\mu=\overline{\mu}/{\sigma}$ and $b=\log{\left(B/S_0\right)}/\sigma$.\\
$\bullet$ $\phi(x)$ and $\Phi(x)$ are the Gaussian density and distribution functions, respectively.\\
$\bullet$ $x_+=\max(x,0)$ and $ \mathbf{1}_{\{ \}}$ is an indicator function.

The payoff of GIC at maturity $T$ can be written as $\left(G_T-K\right)_+$, where $K$ is the strike price and $G_T$ is a random variable defined as
\begin{equation}
G_T\equiv\exp\left(  \frac{1}{T-\tau_{B}^{S}}
\displaystyle{\int_{\tau_{B}^{S}}^{T}}\log{S_u}\textrm{d}u\right) \mathbf{1}_{\{\tau_{B}^{S}<T\}}
+ S_T \mathbf{1}_{\{\tau_{B}^{S}\geqslant T\}}.
\label{GG}
\end{equation}
From definitions \eqref{fpt} and \eqref{GG}, we can see that the price of geometric Istanbul and geometric Asian call options coincide when $S_0\geqslant B$. Note that the geometric Istanbul put option whose a payoff at $T$ equal to $\left(K-G_T\right)_+$ is not priced here. As we will see, our analytical approximation method can be perfectly applied in the case of a put option.
\section{Pricing of geometric Istanbul options}\label{section3}
According to the risk-neutral pricing formula in continuous time\footnote{For more theoretical details and discussions on the general formula for pricing of derivative products under the assumptions of the continuous Black and Scholes model, see \citet{HARRISON1981215}.}, the price (or $premium$) at time 0 of GIO corresponds to expected value of its discounted payoff at maturity, this price will be noted for call option by $GIC_B$. Thus, we have
\begin{equation}
GIC_B=\mathbb{E}^{\overline{\mathbb{P}}}\left[e^{-r T}\left(G_T-K\right)_+\right],
\label{QQ1}
\end{equation}
where $\mathbb{E}^{\overline{\mathbb{P}}}$ is expectation operator under $\overline{\mathbb{P}}$-measure.  

The probability distribution of $G_T$ is essential in order to obtain an analytical formula of $GIC_B$. We notice that this distribution is known when $B$ is not reached before $T$. In this case, the distribution corresponds to the joint distribution of the geometric Brownian motion and its first hitting time of $B$. So, only the distribution when $B$ is reached before $T$ is unknown and need to be calculated.

For $x>0$, we have
\begin{align}
\overline{\mathbb{P}} \left(G_T\leqslant x, \tau_{B}^{S}<T \right)&=\displaystyle{\int_{0}^{T}}\overline{\mathbb{P}} \left(G_T\leqslant x, \tau_{B}^{S}=t \right)\textrm{d}t  \notag \\
&=\displaystyle{\int_{0}^{T}}\overline{\mathbb{P}} \left(\exp\left(  \frac{1}{T-\tau_{B}^{S}}
\displaystyle{\int_{\tau_{B}^{S}}^{T}}\log{S_u}\textrm{d}u\right) \leqslant x, \tau_{B}^{S}=t \right)\textrm{d}t. 
\end{align}
Let us introduce a process $Z_t$, $t \in [0, T]$, defined by $Z_t=W_{\tau_{B}^{S}+t}-W_{\tau_{B}^{S}}$. According to the strong Markov property on event $\{\tau_{B}^{S}<T\}$, the process $Z_t$ is a standard Brownian motion under $\overline{\mathbb{P}}$-measure. This process is started at zero and completely independent of stopping time $\tau_{B}^{S}$. 

Now we can write 
\begin{align}
\overline{\mathbb{P}} \left(\exp\left(  \frac{1}{T-\tau_{B}^{S}}
\displaystyle{\int_{\tau_{B}^{S}}^{T}}\log{S_u}\textrm{d}u\right) \leqslant x, \tau_{B}^{S}=t \right) \notag\\ 
&\!\!\!\!\!\!\!\!\!\!\!\!\!\!\!=\overline{\mathbb{P}} \left( \frac{\overline{\mu}}{2\sigma}\left(T-\tau_{B}^{S}\right)+ \frac{1}{T-\tau_{B}^{S}}\displaystyle{\int_{0}^{T-\tau_{B}^{S}}}Z_u\textrm{d}u \leqslant \frac{1}{\sigma}\log{\left(\frac{x}{B}\right) }, \tau_{B}^{S}=t \right) \notag  \\&\!\!\!\!\!\!\!\!\!\!\!\!\!\!\!
=\overline{\mathbb{P}} \left(  \frac{1}{T-t}\displaystyle{\int_{0}^{T-t}}Z_u\textrm{d}u \leqslant \frac{1}{\sigma}\log{\left(\frac{x}{B}\right) -\frac{\overline{\mu}}{2\sigma}\left(T-t\right)} \right) h(t)  \notag\\&\!\!\!\!\!\!\!\!\!\!\!\!\!\!\!
=\Phi \left( \frac{\sqrt{3}\left( \log(x/B) -\frac{\overline{\mu}}{2}(T-t)\right) }{\sigma\sqrt{T-t}}\right) h(t),
\label{Q8}
\end{align}
where 
\begin{equation}
h(t)=\frac{b}{\sqrt{2 \pi t^3}}\exp\left({-\frac{\left(b-\mu t\right)^2}{2t}}\right),
\label{FPTD}
\end{equation}
is the probability density function of first hitting time of $B$ by the process  $S_t$ (see formula 2.0.2 in \citet{borodin2002handbook}). Note that the equality \eqref{Q8} follows from the fact that the random variable $\int_{0}^{T-t}Z_u\textrm{d}u$ is Gaussian for $0 \leqslant t<T$ with a zero mean and a variance equal to $(T-t)^3/3$ (see \citet{Zhang}).

Then, the distribution of $G_T$ when $B$ is reached before $T$,  is given by the following formula 
\begin{equation}
\overline{\mathbb{P}} \left(G_T\leqslant x, \tau_{B}^{S}<T \right)=\displaystyle{\int_{0}^{T}}\Phi \left( \frac{\sqrt{3}\left( \log(x/B) -\frac{\overline{\mu}}{2}(T-t)\right) }{\sigma\sqrt{T-t}}\right) h(t)\textrm{d}t.  
\label{QQ3}
\end{equation}
The derivative of formula \eqref{QQ3} with respect to $x$ gives
\begin{align}
\begin{split}
\overline{\mathbb{P}} \left(G_T \in  (x, x+\textrm{d}x), \tau_{B}^{S}<T \right) &=
\frac{\sqrt{3} b}{2\pi \sigma x}\exp \left( \frac{3\mu\log\left( x/B\right)}{2\sigma} -\frac{3\mu^2 T}{8}+b\mu \right)\\
 & \quad \times \displaystyle{\int_{0}^{T} }\frac{1}{\sqrt{(T-t)t^3}}\exp   \left( -\frac{3\log^2\left( x/B\right)}{2(T-t)\sigma^2} -\frac{\mu^2}{8}t-\frac{b^2}{2t}\right)  \textrm{d}t \textrm{d}x,
\end{split}
\label{QQ4}
\end{align}
where $\textrm{d}x$ is an infinitesimal quantity.
\begin{Remark}
The integral in \eqref{QQ4} does not admit a closed-form expression; however, it is possible to approach it numerically with Gaussian quadrature methods (see \citet{brass2011quadrature}). As illustrated in \reffig{figure1}, the quantity $\mu^2/8$ is very small for a wide range of parameters $r$ and $\sigma$. This observation will allow us to obtain an analytical approximation of \eqref{QQ4} using a Taylor series expansion around zero.
\end{Remark}
\begin{figure}[H]
 \begin{center}
  \includegraphics[width=9.5cm, height=4cm]{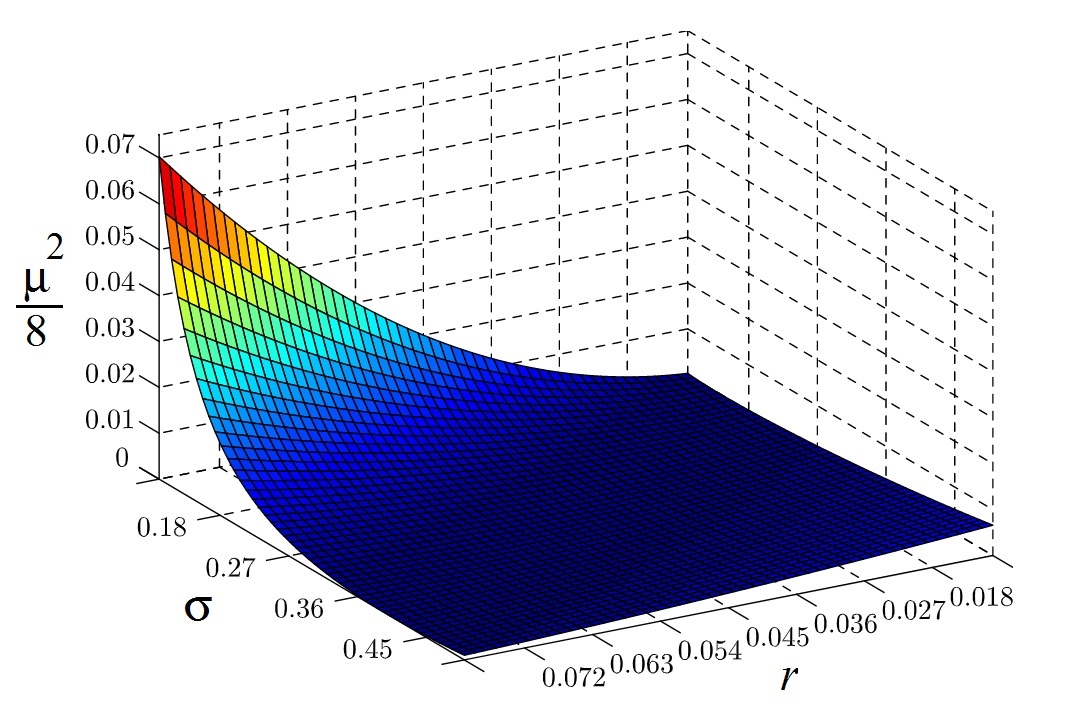}\\
  \caption{Values of $\mu^2/8$ for $r$ from $1\%$ to $8\%$ and $\sigma$ from $10\%$ to $50\%$.}\label{figure1}
 \end{center}
\end{figure}
\begin{Lemma}\label{lem1}
For $\alpha\geqslant 0 $, $\gamma$ and $T>0$, if $\beta$ is around zero, then we have
\begin{align*}
\frac{1}{\pi} \displaystyle{\int_{0}^{T} }\frac{1}{\sqrt{(T-t)t^3}}\exp   \left( -\frac{\alpha^2}{2(T-t)} -\beta t-\frac{\gamma}{t}\right)  \textrm{d}t
 &= \left( \frac{\alpha^2-2\gamma+T}{2}\beta^2-2\beta\right)\left( 1-\Phi(d)\right)\\&\quad
+\left(\frac{\sqrt{T}(\sqrt{2\gamma}-\alpha)}{2}\beta^2 +\sqrt{\frac{2}{T\gamma }}\right)\phi(d) +O(\beta^3),
\end{align*}
where $d={(\alpha+\sqrt{2\gamma})}/{\sqrt{T}}$.
\end{Lemma}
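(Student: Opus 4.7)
My plan is to expand only the $\beta$-dependent factor of the integrand as a Taylor series. Writing $e^{-\beta t}=1-\beta t+\tfrac{1}{2}\beta^2 t^2+O(\beta^3)$ uniformly for $t\in[0,T]$ and exchanging sum and integral, the problem reduces to evaluating
\begin{equation*}
I_n := \frac{1}{\pi}\int_0^T \frac{t^n}{\sqrt{(T-t)\,t^3}}\,\exp\!\left(-\frac{\alpha^2}{2(T-t)}-\frac{\gamma}{t}\right)dt\qquad(n=0,1,2),
\end{equation*}
and checking that $I_0-\beta I_1+\tfrac{1}{2}\beta^2 I_2$ matches the right-hand side. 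The remainder from the Taylor series is bounded on $[0,T]$ by a $\beta^3$ multiple of a convergent integral (the $e^{-\gamma/t}$ factor kills the singularity at $0$), which justifies the $O(\beta^3)$ error.

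Next, I would recognise each $I_n$ as the convolution at $T$ of $f_n(t)$ with $g(t)=t^{-1/2}e^{-\alpha^2/(2t)}$, where $f_0(t)=t^{-3/2}e^{-\gamma/t}$, $f_1(t)=t^{-1/2}e^{-\gamma/t}$ and $f_2(t)=t^{1/2}e^{-\gamma/t}$, and compute each via Laplace transforms. The identity $\mathcal{L}\{t^{-1/2}e^{-a^2/(2t)}\}(s)=\sqrt{\pi/s}\,e^{-a\sqrt{2s}}$ (valid for $a\ge 0$), together with its $t^{-3/2}$ counterpart (essentially the Laplace transform of the hitting-time density $\psi_{\sqrt{2\gamma}}$ appearing already in \eqref{FPTD}), immediately handles $f_0$, $f_1$ and $g$. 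In each case the product $\mathcal{L}\{f_n\}\,\mathcal{L}\{g\}$ factors as $P_n(s)\,e^{-c\sqrt{2s}}$ with $c=\alpha+\sqrt{2\gamma}$ and $P_n$ an inverse power of $s$; standard inversion tables then express $I_0$ and $I_1$ in terms of $e^{-c^2/(2T)}$ and $\mathrm{erfc}(c/\sqrt{2T})$, and the conversions $e^{-d^2/2}=\sqrt{2\pi}\,\phi(d)$ and $\mathrm{erfc}(d/\sqrt{2})=2(1-\Phi(d))$ put these pieces into the shape of the announced answer.

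The main obstacle is $I_2$, whose $f_2(t)=\sqrt{t}\,e^{-\gamma/t}$ falls outside the $t^{-1/2}$ class. I would evaluate its Laplace transform from the Bessel-function formula $\int_0^\infty t^{\nu-1}e^{-st-\gamma/t}dt = 2(\gamma/s)^{\nu/2}K_\nu(2\sqrt{\gamma s})$ at $\nu=3/2$, using the closed form $K_{3/2}(z)=\sqrt{\pi/(2z)}\,e^{-z}(1+1/z)$; after multiplication by $\mathcal{L}\{g\}$ this yields a sum of $s^{-3/2}e^{-c\sqrt{2s}}$ and $s^{-2}e^{-c\sqrt{2s}}$ pieces. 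Inverting the second one requires the less-standard identity $\mathcal{L}^{-1}[s^{-2}e^{-k\sqrt{s}}](T)=(T+k^2/2)\,\mathrm{erfc}(k/(2\sqrt{T}))-k\sqrt{T/\pi}\,e^{-k^2/(4T)}$, which I would derive by integrating the well-known inverse of $s^{-1}e^{-k\sqrt{s}}$ in $T$. Careful bookkeeping of the $\sqrt{2\pi}$ and $\sqrt{2}$ factors together with the algebraic identities $c^2-2c\sqrt{2\gamma}=\alpha^2-2\gamma$ and $2\sqrt{2\gamma}-c=\sqrt{2\gamma}-\alpha$ then makes the three contributions collapse to the stated coefficients of $1-\Phi(d)$ and $\phi(d)$.
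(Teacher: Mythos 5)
Your proposal is correct, and its opening move coincides exactly with the paper's: both expand $e^{-\beta t}$ to second order and reduce the lemma to the three $\beta$-free integrals $I_0,I_1,I_2$ (the paper's $A$, $B$, $C$ in \eqref{QQ6}--\eqref{QQ8}, up to the factor $1/\pi$). Where you diverge is in how those integrals are evaluated. The paper computes $A$ from the explicit antiderivative \eqref{A1.} after a change of variables, and then obtains $B$ and $C$ by the observations $\partial_\gamma B=-A$ and $\partial_\gamma C=-B$, integrating in the parameter $\gamma$ and pinning the integration constants with the special values \eqref{B.1} and \eqref{B.2} (which the paper itself derives by Laplace convolution). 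You instead treat all three integrals uniformly as convolutions at $T$ and invert the product of Laplace transforms directly; this forces you through $\mathcal{L}\{t^{1/2}e^{-\gamma/t}\}$ via $K_{3/2}$ and the less-standard inverse of $s^{-2}e^{-k\sqrt{s}}$, both of which you state correctly, and your resulting $I_2=(T+\alpha^2-2\gamma)(1-\Phi(d))+\sqrt{T}(\sqrt{2\gamma}-\alpha)\phi(d)$ agrees with the paper's $C/\pi$. Your route is more systematic and self-contained --- one tool handles all three integrals --- at the price of a Bessel-function detour; the paper's parameter-differentiation trick avoids the harder transforms but needs two auxiliary boundary evaluations and a stock of verified antiderivatives. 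One small remark: in the $O(\beta^3)$ remainder it is the factor $t^3$ (not $e^{-\gamma/t}$) that tames the $t^{-3/2}$ singularity at the origin; the exponential is what makes $I_0$ itself finite.
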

\begin{proof}
See \ref{AppendixC}.
\end{proof}
\begin{Theorem}\label{the1}
Suppose that $K\geqslant B$. We have
\begin{align}
GIC_B &\approx \frac{\sqrt{3}b}{2\sigma}\exp\left( -\frac{3\mu^2 T}{8}+b\mu-r T\right)\notag\\&\quad\times \Bigg\{B\bigg\{ \exp(z_3)\Big\{ z_4(1-\Phi(z_2))+z_6\phi(z_2)+z_7(1-\Phi(z_2))\Big\}+z_5(1-\Phi(z_1))\bigg\}\notag \\&\quad-K\bigg\{ \exp(z_9)\Big\{ z_{10}(1-\Phi(z_8))+\left( z_{12}+{w}/{a^2}\right)\phi(z_8)+z_{13}(1-\Phi(z_8))\Big\}+z_{11}(1-\Phi(z_1))\bigg\}   \Bigg\},
\label{T1}
\end{align}
where\\ 
\begin{minipage}{0.25\linewidth}
$a=\frac{\sqrt{3}}{\sigma\sqrt{T}}$,\\$h=\frac{|b|}{\sqrt{T}}$,
\end{minipage}
\begin{minipage}{0.25\linewidth}
$c=\frac{3\mu}{2\sigma}+1$,\\$k=\frac{(T-b^2)\mu^4}{128}-\frac{\mu^2}{4}$,
\end{minipage}
\begin{minipage}{0.25\linewidth}
$ d=\frac{3\mu^4}{128\sigma^2}$,\\$ l=\frac{2}{Th}+\frac{T\mu^4h}{128}$,
\end{minipage}
\begin{minipage}{0.50\linewidth}
 $e=c-1$,\\$w=-\frac{\mu^4\sqrt{3T}}{128\sigma}$,
\end{minipage}
\\
\begin{minipage}{0.50\linewidth}
$z_1=a\log{\left(\frac{K}{B}\right)}+h$,\\
$z_3=\frac{c^2}{2a^2}-\frac{hc}{a}$,\\ 
$z_5=\left(\frac{K}{B}\right)^c\left(-\frac{d\log^2(K/B)}{c}+\frac{2d\log(K/B)}{c^2}-\frac{2d}{c^3} -\frac{k}{c}\right)$,\\
$z_7=\frac{wc}{a^3}-\frac{wh}{a^2}+\frac{l}{a}$,\\
$ z_9=\frac{e^2}{2a^2}-\frac{he}{a}$,\\
$ z_{11}=\left(\frac{K}{B}\right)^e\left(-\frac{d\log^2(K/B)}{e}+\frac{2d\log(K/B)}{e^2}-\frac{2d}{e^3} -\frac{k}{e}\right)$,\\
$ z_{13}=\frac{we}{a^3}-\frac{wh}{a^2}+\frac{l}{a}.$
\end{minipage}
 \begin{minipage}{0.50\linewidth}
$z_2=z_1-\frac{c}{a}$,\\
$z_4=-\frac{2dh}{a^3}-\frac{d(1-h^2)}{ca^2}+\frac{2d}{c^3}+\frac{2dh}{ac^2}+\frac{dc}{a^4}+\frac{k}{c}$,\\
$z_6=\frac{d\log(K/B)}{ac}-\frac{2d}{ac^2}-\frac{dh}{ca^2}+\frac{d}{a^3}+\frac{w}{a^2}$,\\
$z_8=z_1-\frac{e}{a}$,\\
$z_{10}=-\frac{2dh}{a^3}-\frac{d(1-h^2)}{ea^2}+\frac{2d}{e^3}+\frac{2dh}{ae^2}+\frac{de}{a^4}+\frac{k}{e}$,\\
$z_{12}=\frac{d\log(K/B)}{ae}-\frac{2d}{ae^2}-\frac{dh}{ea^2}+\frac{d}{a^3}$,\\ \end{minipage}

\end{Theorem}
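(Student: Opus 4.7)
The plan is to reduce the expectation $GIC_B = \mathbb{E}^{\overline{\mathbb{P}}}[e^{-rT}(G_T-K)_+]$ to a deterministic integral against the density computed from \eqref{QQ4}, apply Lemma \ref{lem1} inside that density, and then evaluate the resulting elementary Gaussian--polynomial integrals. First I would observe that under the hypothesis $K \geqslant B$, on the event $\{\tau_B^S \geqslant T\}$ one has $S_T \leqslant B \leqslant K$ almost surely, so the second piece of \eqref{GG} contributes zero to the payoff. Thus
\begin{equation*}
GIC_B \;=\; e^{-rT}\int_{K}^{\infty}(x-K)\,\overline{\mathbb{P}}\bigl(G_T\in dx,\tau_B^S<T\bigr),
\end{equation*}
and the density is given by \eqref{QQ4}.

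Next I would apply Lemma \ref{lem1} to the inner time-integral in \eqref{QQ4}. The matching $\alpha=\sqrt{3}\log(x/B)/\sigma$, $\beta=\mu^2/8$, $\gamma=b^2/2$ is legitimate because $x\geqslant K\geqslant B$ forces $\alpha\geqslant 0$, and $\beta$ is small by Remark 1. Under these choices the Gaussian argument of the lemma is $(\alpha+b)/\sqrt{T}=a\log(x/B)+h$, and a direct expansion identifies the coefficient of $1-\Phi(\cdot)$ with $d\log^2(x/B)+k$ and the coefficient of $\phi(\cdot)$ with $w\log(x/B)+l$, using exactly the constants defined in the theorem. This yields a two-term approximation for $f(x)$ consisting of a factor $\tfrac{\sqrt{3}b}{2\sigma x}\exp\bigl(\tfrac{3\mu\log(x/B)}{2\sigma}-\tfrac{3\mu^2 T}{8}+b\mu\bigr)$ times a linear combination of $1-\Phi(a\log(x/B)+h)$ and $\phi(a\log(x/B)+h)$ with polynomial-in-$\log(x/B)$ coefficients.

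Plugging this into the integral, I would factor out $\tfrac{\sqrt{3}b}{2\sigma}\exp(-\tfrac{3\mu^2T}{8}+b\mu-rT)$, write $(x-K)/x\cdot \exp\bigl(\tfrac{3\mu\log(x/B)}{2\sigma}\bigr)=(x/B)^{c-1}-(K/x)(x/B)^{c-1}$ which produces the powers $(x/B)^c$ and $(x/B)^e$, and change variables $u=\log(x/B)$. Everything reduces to integrals of the generic form
\begin{equation*}
\int_{\log(K/B)}^{\infty}u^{n}e^{\lambda u}(1-\Phi(au+h))\,du\quad\text{and}\quad\int_{\log(K/B)}^{\infty}u^{n}e^{\lambda u}\phi(au+h)\,du,
\end{equation*}
for $n\in\{0,1,2\}$ and $\lambda\in\{c,e\}$. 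The $\phi$-integrals are computed by completing the square $\lambda u-\tfrac12(au+h)^2=-\tfrac{a^2}{2}(u-(\lambda-ah)/a^{2})^{2}+\lambda^{2}/(2a^{2})-\lambda h/a$, which produces the exponential prefactors $\exp(z_3)$, $\exp(z_9)$ and the shifted Gaussian CDFs at $z_2$, $z_8$; the polynomial factors $u,u^2$ are handled via the standard truncated-Gaussian moment identities. Each $1-\Phi$ integral is then reduced to $\phi$-integrals by a single integration by parts, using $(1-\Phi(au+h))'=-a\phi(au+h)$, which also generates the boundary terms $(K/B)^{c}(1-\Phi(z_1))$ and $(K/B)^{e}(1-\Phi(z_1))$ that are collected into $z_5$ and $z_{11}$.

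I expect the conceptual part to be short and the main obstacle to be the bookkeeping: gathering the contributions from the six $(n,\lambda)$-integrals, correctly tracking the $d,k,w,l$ coefficients and the integration-by-parts boundary terms, and matching the final linear combination term-by-term with the constants $z_4,z_6,z_7,z_{10},z_{12},z_{13}$ given in the statement. No additional analytic input beyond Lemma \ref{lem1} and elementary truncated-moment identities is required, but care is needed not to confuse the two cases $\lambda=c$ and $\lambda=e$ and to keep the global prefactor $\tfrac{\sqrt{3}b}{2\sigma}\exp(-\tfrac{3\mu^2T}{8}+b\mu-rT)$ in front throughout.
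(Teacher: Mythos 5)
Your proposal is correct and follows essentially the same route as the paper: the observation that the event $\{\tau_B^S\geqslant T\}$ contributes nothing when $K\geqslant B$ (the paper phrases this as $UOC_B=0$), the application of Lemma \ref{lem1} with $\alpha=\sqrt{3}\log(x/B)/\sigma$, $\beta=\mu^2/8$, $\gamma=b^2/2$, the split into the $(x/B)^c$ and $(x/B)^e$ pieces after the substitution $u=\log(x/B)$, and the reduction to Gaussian--polynomial integrals all match the paper's proof. The only cosmetic difference is that the paper evaluates the final integrals by citing the prefabricated antiderivative formulas \eqref{A3.} and \eqref{A4.} (verified by differentiation), whereas you rederive them via completing the square and integration by parts --- correctly identifying, in particular, that the boundary terms yield $z_5$ and $z_{11}$.
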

\begin{pot1}
We start by rewriting formula \eqref{QQ1} as 
\begin{equation}
GIC_B=\mathbb{E}^{\overline{\mathbb{P}}}\left[e^{-r T}\left(G_T-K\right)_+\mathbf{1}_{\{\tau_{B}^{S}<T\}}\right]+UOC_B,
\label{QQ13}
\end{equation}
where  $UOC_B$ is a price of an up-and-out barrier call option at time 0. The first term in  \eqref{QQ13} is written as follows
\begin{equation}
\mathbb{E}^{\overline{\mathbb{P}}}\left[e^{-r T}\left(G_T-K\right)_+\mathbf{1}_{\{\tau_{B}^{S}<T\}}\right]=
\displaystyle{\int_{K}^{+\infty} }e^{-rT}\left(x-K\right)\overline{\mathbb{P}} \left(G_T \in  (x, x+\textrm{d}x), \tau_{B}^{S}<T \right).
\label{QQ14}
\end{equation}
Using \reflem{lem1}, we obtain the following approximation 
\begin{equation*}
\mathbb{E}^{\overline{\mathbb{P}}}\left[e^{-r T}\left(G_T-K\right)_+\mathbf{1}_{\{\tau_{B}^{S}<T\}}\right]\approx  e^{-rT}\left( \mathcal{A}-K\mathcal{B}\right),
\end{equation*}
where
\begin{align*}
 \mathcal{A}&=\frac{B b\sqrt{3}}{2 \sigma }\exp \left(-\frac{3\mu^2 T}{8}+b\mu \right)\\&\quad \times
\Bigg\{ \displaystyle{\int_{\log \left(\frac{K}{B}\right) }^{+\infty}e^{(\frac{3\mu}{2\sigma}+1)z} \left( \left(\frac{3}{\sigma^2}z^2+T-b^2 \right)\frac{\mu^4}{128}-\frac{\mu^2}{4}\right)
\left(1-\Phi\left(\frac{\sqrt{3}}{\sigma\sqrt{T}}|z|+\frac{|b|}{\sqrt{T}}\right) \right)\textrm{d}z }\\&\quad + \displaystyle{\int_{\log \left(\frac{K}{B}\right) }^{+\infty}e^{(\frac{3\mu}{2\sigma}+1)z}\left(\left(|b|-\frac{\sqrt{3}}{\sigma}|z|\right)\frac{\mu^4\sqrt{T}}{128}+\frac{2}{|b|\sqrt{T}}\right) \phi\left(\frac{\sqrt{3}}{\sigma\sqrt{T}}|z|+\frac{|b|}{\sqrt{T}}\right)\textrm{d}z}\Bigg\}
\end{align*}
and
\begin{align*}
 \mathcal{B}&=\frac{b\sqrt{3}}{2 \sigma }\exp \left(-\frac{3\mu^2 T}{8}+b\mu \right)\\&\quad \times
\Bigg\{ \displaystyle{\int_{\log \left(\frac{K}{B}\right) }^{+\infty}e^{\frac{3\mu}{2\sigma}z} \left( \left(\frac{3}{\sigma^2}z^2+T-b^2 \right)\frac{\mu^4}{128}-\frac{\mu^2}{4}\right)
\left(1-\Phi\left(\frac{\sqrt{3}}{\sigma\sqrt{T}}|z|+\frac{|b|}{\sqrt{T}}\right) \right)\textrm{d}z }\\&\quad + \displaystyle{\int_{\log \left(\frac{K}{B}\right) }^{+\infty}e^{\frac{3\mu}{2\sigma}z}\left(\left(|b|-\frac{\sqrt{3}}{\sigma}|z|\right)\frac{\mu^4\sqrt{T}}{128}+\frac{2}{|b|\sqrt{T}}\right)\phi\left(\frac{\sqrt{3}}{\sigma\sqrt{T}}|z|+\frac{|b|}{\sqrt{T}}\right)\textrm{d}z}\Bigg\}.
\end{align*}
Since $K\geqslant B$, then $UOC_B$ has a zero value. It is sufficient to calculate the quantities  $\mathcal{A}$ and $\mathcal{B}$ with formulas \eqref{A3.} and \eqref{A4.} to have the desired result.
\end{pot1}
\begin{Theorem}\label{the2}
Suppose that $K< B$. We have
\begin{align}
GIC_B &\approx \frac{\sqrt{3}b}{2\sigma}\exp\left( -\frac{3\mu^2 T}{8}+b\mu-r T\right)\notag\\&\quad \times \Bigg\{B\bigg\{ \exp(z_3)\Big\{ z_4(\Phi(z_2)-\Phi(z_1))-z_5\phi(z_2)+\left(d\log(B/K)/(ca)+z_5\right)\phi(z_1) +\exp(-2hc/a)\notag\\&\quad\times\big\{z_6(\Phi(z_2-2c/a)-1)-(z_5+2dh/(ca^2))\phi(z_2-2c/a)\big\}\Big\}+z_7(1-\Phi(z_1-c/a))(K/B)^c \bigg\} 
\notag\\&\quad-K\bigg\{\exp(z_{10})\Big\{ z_{11}(\Phi(z_9)-\Phi(z_8))-z_{12}\phi(z_9)+\left(d\log(B/K)/(ea)+z_{12}\right)\phi(z_8)+\exp(-2he/a)\notag\\&\quad\times\big\{z_{13}(\Phi(z_9-2e/a)-1)-(z_{12}+2dh/(ea^2))\phi(z_9-2e/a)\big\} \Big\}+z_{14}(1-\Phi(z_8-e/a))(K/B)^e\bigg\}\Bigg\}\notag\\&\quad+UOC_B,
\label{T2}
\end{align}
where\\
 \begin{minipage}{0.25\linewidth}
$a=\frac{\sqrt{3}}{\sigma\sqrt{T}}$,\\$h=\frac{|b|}{\sqrt{T}}$,
\end{minipage}
\begin{minipage}{0.25\linewidth}
$c=\frac{3\mu}{2\sigma}+1$,\\$k=\frac{(T-b^2)\mu^4}{128}-\frac{\mu^2}{4}$,
\end{minipage}
\begin{minipage}{0.25\linewidth}
$d=\frac{3\mu^4}{128\sigma^2}$,\\$l=\frac{2}{Th}+\frac{T\mu^4h}{128}$,
\end{minipage}
\begin{minipage}{0.25\linewidth}
 $e=c-1$,\\$w=-\frac{\mu^4\sqrt{3T}}{128\sigma}$,
\end{minipage}
\\
\begin{minipage}{0.50\linewidth}
$z_1=a\log\left(\frac{B}{K}\right)+h+\frac{c}{a}$,\\
$z_3=\frac{c^2}{2a^2}+\frac{hc}{a}$,\\ 
$z_5=\frac{2d}{ac^2}-\frac{dh}{ca^2}-\frac{d}{a^3}-\frac{w}{a^2}$,\\
$z_7=-\frac{d}{c}\log^2\left(\frac{K}{B}\right)+\frac{2d}{c^2}\log\left(\frac{K}{B}\right)-\frac{2d}{c^3}-\frac{k}{c}$,\\
$z_9=z_8-a\log\left(\frac{B}{K}\right)$,\\
$z_{11}=\frac{2dh}{a^3}-\frac{d(1-h^2)}{ea^2}+\frac{2d}{e^3}-\frac{2dh}{ae^2}+\frac{de}{a^4}+\frac{k}{e}+\frac{we}{a^3}+\frac{wh}{a^2}-\frac{l}{a}$,\\
$ z_{13}=2\left(\frac{2hd}{a^3}-\frac{2hd}{e^2a}+\frac{wh}{a^2}-\frac{l}{a}\right)-z_{11}$,
\end{minipage}
 \begin{minipage}{0.50\linewidth}
$z_2=z_1-a\log\left(\frac{B}{K}\right)$,\\
$z_4=\frac{2dh}{a^3}-\frac{d(1-h^2)}{ca^2}+\frac{2d}{c^3}-\frac{2dh}{ac^2}+\frac{dc}{a^4}+\frac{k}{c}+\frac{wc}{a^3}+\frac{wh}{a^2}-\frac{l}{a}$,\\
$z_6=2\left(\frac{2hd}{a^3}-\frac{2hd}{c^2a}+\frac{wh}{a^2}-\frac{l}{a}\right)-z_4$,\\
$z_8=a\log\left(\frac{B}{K}\right)+h+\frac{e}{a}$,\\
$z_{10}=\frac{e^2}{2a^2}+\frac{he}{a}$,\\
$z_{12}=\frac{2d}{ae^2}-\frac{dh}{ea^2}-\frac{d}{a^3}-\frac{w}{a^2}$,\\
$z_{14}=-\frac{d}{e}\log^2\left(\frac{K}{B}\right)+\frac{2d}{e^2}\log\left(\frac{K}{B}\right)-\frac{2d}{e^3}-\frac{k}{e}$
 \end{minipage}
\\
\\
and $UOC_B$ is a price of an up-and-out barrier call option at time 0.
\end{Theorem}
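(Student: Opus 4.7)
\begin{pot2}
The plan is to recycle the decomposition used in the proof of Theorem \ref{the1}, but since $K<B$ the up-and-out call contribution $UOC_B$ no longer vanishes and must be kept explicit, and the residual expectation now integrates $x$ across the level $B$, so the computation must be split at $x=B$: the Taylor approximation furnished by \reflem{lem1} depends on $x$ only through $|\log(x/B)|$ and changes functional form there.

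First I would write, as in \eqref{QQ13},
\[
GIC_B=\mathbb{E}^{\overline{\mathbb{P}}}\!\left[e^{-rT}(G_T-K)_+\mathbf{1}_{\{\tau_B^S<T\}}\right]+UOC_B,
\]
and represent the first expectation as $\int_K^{+\infty}e^{-rT}(x-K)\,\overline{\mathbb{P}}(G_T\in(x,x+\mathrm{d}x),\tau_B^S<T)$. I would then plug in \eqref{QQ4}, apply \reflem{lem1} to the inner $t$--integral (with $\alpha=\sqrt{3}|\log(x/B)|/\sigma$, $\beta=\mu^2/8$, $\gamma=b^2/2$), and pass to the variable $z=\log(x/B)$, which reduces the problem to one-dimensional integrals of the generic types $\int e^{\lambda z}Q(z)\,(1-\Phi(a|z|+h))\,\mathrm{d}z$ and $\int e^{\lambda z}Q(z)\,\phi(a|z|+h)\,\mathrm{d}z$, with $Q$ a polynomial of degree at most two and $\lambda\in\{c,e\}$.

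Next I would split $(\log(K/B),+\infty)$ at $z=0$. On the negative half $|z|=-z$; completing the square $\lambda z-\tfrac12(az-h)^2$ factors out $\exp(z_3)$ (respectively $\exp(z_{10})$) and the resulting Gaussian primitives, combined with an integration by parts that converts the $\Phi$--integrals into $\phi$--integrals, yield the brackets $\Phi(z_2)-\Phi(z_1)$ and $\Phi(z_9)-\Phi(z_8)$ together with the $\phi(z_1),\phi(z_2),\phi(z_8),\phi(z_9)$ contributions whose coefficients collect into $z_4,z_5,z_6$ and $z_{11},z_{12},z_{13}$. On the positive half $|z|=z$; the analogous completion of the square yields the shifted exponent $z_3-2hc/a$ (respectively $z_{10}-2he/a$), which, once $\exp(z_3)$ (respectively $\exp(z_{10})$) is factored out, materializes as the prefactors $\exp(-2hc/a)$ and $\exp(-2he/a)$ in front of the $(\Phi(z_2-2c/a)-1)$ and $(\Phi(z_9-2e/a)-1)$ brackets. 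Finally, the boundary term of the IBP at the lower endpoint $z=\log(K/B)$ contributes the $z_7(K/B)^c(1-\Phi(z_1-c/a))$ and $z_{14}(K/B)^e(1-\Phi(z_8-e/a))$ pieces; the upper endpoint at $z=+\infty$ vanishes because $(1-\Phi)$ kills the polynomial--times--exponential growth.

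The individual primitives are elementary and parallel those already tabulated in the proof of Theorem \ref{the1}; the main difficulty here is purely organizational, namely keeping track of the many constants produced by the second-order coefficients of \reflem{lem1} and regrouping them so as to match the abbreviations $z_1,\dots,z_{14}$ of the statement. Adding $UOC_B$, whose closed-form value under the Black--Scholes dynamics is classical, completes the approximation \eqref{T2}.
\end{pot2}
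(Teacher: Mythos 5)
Your proposal is correct and follows essentially the same route as the paper, which simply states that the proof is analogous to that of Theorem \ref{the1} with the additional nonzero $UOC_B$ term. You have in fact supplied more detail than the paper does, correctly identifying the one genuine new ingredient: since $K<B$ the integration variable $z=\log(x/B)$ crosses zero, and because \reflem{lem1} enters through $\alpha=\sqrt{3}\,|\log(x/B)|/\sigma$ the integral must be split at $z=0$, which is precisely what generates the $\Phi(z_2)-\Phi(z_1)$ brackets and the $\exp(-2hc/a)$, $\exp(-2he/a)$ prefactors absent from Theorem \ref{the1}.
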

\begin{pot2}
The proof is similar to that of Theorem 1, it should just be noted that the price $UOC_B$ is nonzero when $K<B$. Its value  for $S_0\leqslant B$ is known in closed-form (see formula (7.3.19) in \citet{shreve2004stochastic}).
\end{pot2}
\section{Numerical analysis}\label{section4}
In this section, we compare our analytical approximation formulas \eqref{T1} and \eqref{T2} with MC simulations. In our simulation procedure, we use the CV as a variance reduction technique of estimator obtained by crude MC method. We analyze two types of simulations errors, namely, the standard error and the relative error noted by S.E. and R.E., respectively. Our calculation algorithms are implemented with $\textsf{R}$ software version 3.5.1 on a PC, Dell, Intel(R) core(TM) i3, 1.70GHZ and running under Windows 8. To simulate the price \eqref{QQ1}, we start by discretizing the interval $[0,T]$ into $n=2500$ points, $0=t_0<t_1<...<t_n=T$, with the discretization step $\Delta t=T/n$. The simulation of model \eqref{GBM} is given by the following recursion formula  
\begin{equation}
S_{t_{i+1}}=S_{t_{i}}\exp \left( \overline{\mu} \Delta t + \sigma \sqrt{\Delta t}Y_{i+1} \right), \ \     i \in \big\{0,1, ... ,n-1 \big\} ,
\end{equation}
where $Y_1, Y_2, ... ,Y_n$ is $n$ i.i.d. standard Gaussian random variables. In order to obtain a realization of the random variable $G_T$, the time integral in \eqref{GG} is approximated with trapezoidal rule as follows
\begin{equation}
\displaystyle{\int_{\tau_{B}^{S}}^{T}}\log{S_u}\textrm{d}u \approx \sum_{i=t_B^S/\Delta t}^{n-1}\frac{\log{\left( S_{t_{i}}S_{t_{i+1}}\right)}}{2}\Delta t,
\end{equation}
where  $t_B^S= \inf \big\{t_i, i \in \{0,1, ... ,n-1 \} | S_{t_{i}}\geqslant B\big\}$ is discret version of first hitting time $\tau_B^S$. The number of paths used in our MC simulations is 10000. We take as a control variate the payoff of a geometric Asian call option (GAC) since the payoff of this option depends on  $S_{t_{0}}, S_{t_{1}}, ... , S_{t_{n}}$, which gives a high correlation with the payoff of our option. Our controlled estimator for $GIC_B$ is given by
\begin{equation}
 G\widehat{I}C_{B}^{CV}=G \widehat{I}C_{B}^{MC}-\theta^\star\left(G\widehat{A}C^{MC}-GAC\right),
\label{AV}
\end{equation}
where $G \widehat{I}C_{B}^{MC}$ and $G \widehat{A}C^{MC}$ are a crude MC estimators for $GIC_B$ and $GAC$ respectively and $\theta^\star$ is a parameter that minimizes the variance of  $G\widehat{I}C_{B}^{CV}$.\footnote{We take $\theta^\star={Cov(\mathcal{H}_1,\mathcal{H}_2)}/{Var(\mathcal{H}_2)}$, where $\mathcal{H}_1$ and $\mathcal{H}_2$ are the payoffs of geometric Istanbul and geometric Asian calls option, respectively. Note that the exact value of $\theta^\star$ is unknown we approximate it using the sample variance and covariance.} 

In \reftab{tab1}, we provide a comparison between the approximate price \eqref{T1} and the one obtained by MC simulations with the CV technique for different input parameters. The results obtained show that our approximation is efficient and could be applied in finance since the relative errors do not exceed $1.33\%$. The results in \reftab{tab1} also show that the option price increases as $K$ approaches $B$. Similarly, for \reftab{tab2}, the relative errors obtained with formula \eqref{T2} are all strictly less than $1.35\%$. This confirms once again that the price we provide for GIC is stable to changes in input parameters. We also observe from the results in \reftab{tab2} that the option price decreases as $K$ approaches $B$. It remains to be noted that in both Tables \ref{tab1} and \ref{tab2} the option price increases for longer expiration date, which is expected because the price of any type of option depends directly on its time-value.
\begin{table}[H]
\centering
\resizebox{0.86\textwidth}{!}{\begin{minipage}{\textwidth}
\caption{Comparison of geometric Istanbul call price obtained by our analytical approximation formula \eqref{T1} to that obtained by Monte-Carlo simulations.}
\label{tab1}
 \begin{tabular}{llllllllllllll}
 \hline
 &&&$T=0.5$&&&&$T=1$&&&&$T=1.5$&&\\
 \cline{4-6}\cline{8-10}\cline{12-14}
$S_0$&$K$&$B$&Approx.&MCV&R.E.&&Approx.&MCV&R.E.&&Approx.&MCV&R.E.\\
&&&&(S.E.)&(\%)&&&(S.E.)&(\%)&&&(S.E.)&(\%)\\
\hline
$57$&$63$&$60$&$1.2886$&$\;1.2828$&$0.4521$&&$2.4889$&$\;2.5103$&$0.8489$&&$3.4720$&$\;3.5082$&$1.0317$\\
&&&&$(0.0087)$&&&&$(0.0127)$&&&&$(0.0158)$&\\
$58$&$63$&$60$&$1.4739$&$\;1.4864$&$0.8415$&&$2.7201$&$\;2.7566$&$1.3225$&&$3.7257$&$\;3.7531$&$0.7307$\\
&&&&$(0.0081)$&&&&$(0.0112)$&&&&$(0.0132)$&\\
$59$&$63$&$60$&$1.6747$&$\;1.6863$&$0.6860$&&$2.9622$&$\;2.9982$&$1.1991$&&$3.9878$&$\;4.0251$&$0.9256$\\
&&&&$(0.0194)$&&&&$(0.0088)$&&&&$(0.0100)$&\\
\hline
$60$&$63$&$63$&$2.4187$&$\;2.4414$&$0.9283$&&$3.8050$&$\;3.8491$&$1.1477$&&$4.8783$&$\;4.9095$&$0.6362$\\
&&&&$(0.0122)$&&&&$(0.0160)$&&&&$(0.0178)$&\\
$60$&$64$&$63$&$2.0400$&$\;2.0585$&$0.9015$&&$3.4023$&$\;3.4367$&$1.0024$&&$4.4704$&$\;4.4746$&$0.0945$\\
&&&&$(0.0113)$&&&&$(0.0149)$&&&&$(0.0162)$&\\
$60$&$65$&$63$&$1.7079$&$\;1.7226$&$0.8529$&&$3.0328$&$\;3.0610$&$0.9210$&&$4.0893$&$\;4.1146$&$0.6148$\\
&&&&$(0.0105)$&&&&$(0.0141)$&&&&$(0.0161)$&\\
\hline
$70$&$75$&$72$&$2.0299$&$\;2.0501$&$0.9853$&&$3.5694$&$\;3.5975$&$0.7810$&&$4.7936$&$\;4.8402$&$0.9626$\\
&&&&$(0.0095)$&&&&$(0.0126)$&&&&$(0.0145)$&\\
$70$&$75$&$73$&$2.1844$&$\;2.1984$&$0.6390$&&$3.7503$&$\;3.7965$&$1.2176$&&$4.9874$&$\;5.0418$&$1.0803$\\
&&&&$(0.0116)$&&&&$(0.0158)$&&&&$(0.0187)$&\\
$70$&$75$&$75$&$2.5116$&$\;2.5353$&$0.9347$&&$4.1237$&$\;4.1585$&$0.8350$&&$5.3831$&$\;5.4315$&$0.8903$\\
&&&&$(0.0166)$&&&&$(0.0210)$&&&&$(0.0243)$&\\
\hline
\multicolumn{14}{@{}p{172mm}}{\footnotesize{Notes: The input parameters are taken as follows: $r=0.05$ and $\sigma=0.3$. We note by "Approx." the price of geometric Istanbul call option obtained with formula \eqref{T1} and by "MCV" the Monte-Carlo estimator of the price of the same option using the control variates method. We also note by"S.E." the standard error of MCV and by "R.E." the relative error wich is given in percentage with the following formula: $R.E.=\frac{|Approx.-MCV|}{MCV}\times 100\%$.}}
\end{tabular}
\end{minipage}}
\centering
\resizebox{0.86\textwidth}{!}{\begin{minipage}{\textwidth}
\caption{Comparison of geometric Istanbul call price obtained by our analytical approximation formula \eqref{T2} to that obtained by Monte-Carlo simulations. }
\label{tab2}
 \begin{tabular}{llllllllllllll}
 \hline
 &&&$T=0.5$&&&&$T=1$&&&&$T=1.5$&&\\
 \cline{4-6}\cline{8-10}\cline{12-14}
$S_0$&$K$&$B$&Approx.&MCV&R.E.&&Approx.&MCV&R.E.&&Approx.&MCV&R.E.\\
&&&&(S.E.)&(\%)&&&(S.E.)&(\%)&&&(S.E.)&(\%)\\
\hline
$55$&$56$&$58$&$3.0603$&$\;3.0859$&$0.8327$&&$4.3377$&$\;4.3858$&$1.0980$&&$5.3139$&$\;5.3606$&$0.8696$\\
&&&&$(0.0136)$&&&&$(0.0167)$&&&&$(0.0184)$&\\
$56$&$56$&$58$&$3.3988$&$\;3.4029$&$0.1205$&&$4.6770$&$\;4.7357$&$1.2408$&&$5.6544$&$\;5.7237$&$1.2112$\\
&&&&$(0.0113)$&&&&$(0.0143)$&&&&$(0.0161)$&\\
$57$&$56$&$58$&$3.7535$&$\;3.7905$&$0.9762$&&$5.0266$&$\;5.0444$&$0.3531$&&$6.0025$&$\;6.0535$&$0.8421$\\
&&&&$(0.0091)$&&&&$(0.0096)$&&&&$(0.0118)$&\\
\hline
$60$&$61$&$64$&$3.5470$&$\;3.5650$&$0.5039$&&$4.9452$&$\;4,9782$&$0,6634$&&$6.0113$&$\;6.0866$&$1.2374$\\
&&&&$(0.0165)$&&&&$(0.0201)$&&&&$(0.0233)$&\\
$60$&$62$&$64$&$3.0547$&$\;3.0800$&$0.8214$&&$4.4610$&$\;4.4886$&$0.6155$&&$5.5376$&$\;5.5609$&$0.4190$\\
&&&&$(0.0156)$&&&&$(0.0190)$&&&&$(0.0210)$&\\
$60$&$63$&$64$&$2.6087$&$\;2.6245$&$0.6041$&&$4.0103$&$\;4.0650$&$1.3445$&&$5.0911$&$\;5.1519$&$1.1792$\\
&&&&$(0.0146)$&&&&$(0.0188)$&&&&$(0.0211)$&\\
\hline
$79$&$81$&$82$&$3.8378$&$\;3.8695$&$0.8190$&&$5.6662$&$\;5.7112$&$0.7886$&&$7.0688$&$\;7.1464$&$1.0859$\\
&&&&$(0.0155)$&&&&$(0.0188)$&&&&$(0.0224)$&\\
$79$&$81$&$85$&$4.4841$&$\;4.4995$&$0.3419$&&$6.3405$&$\;6.4147$&$1.1562$&&$7.7554$&$\;7.8190$&$0.8137$\\
&&&&$(0.0227)$&&&&$(0.0282)$&&&&$(0.0308)$&\\
$79$&$81$&$87$&$4.9003$&$\;4.9394$&$0.7914$&&$6.7895$&$\;6.8440$&$0.7962$&&$8.2147$&$\;8.2225$&$0.0944$\\
&&&&$(0.0275)$&&&&$(0.0327)$&&&&$(0.0358)$&\\
\hline
\multicolumn{14}{@{}p{172mm}}{\footnotesize{Notes: The input parameters are taken as follows: $r=0.05$ and $\sigma=0.3$. We note by "Approx." the price of geometric Istanbul call option obtained with formula \eqref{T2} and by "MCV" the Monte-Carlo estimator of the price of the same option using the control variates method. We also note by"S.E." the standard error of MCV and by "R.E." the relative error wich is given in percentage with the following formula: $R.E.=\frac{|Approx.-MCV|}{MCV}\times 100\%$. }}
\end{tabular}
\end{minipage}}
\end{table}
In \reftab{tab3}, we analyze the robustness of approximation formulas \eqref{T1} and \eqref{T2} when the maturity date is long. Our analysis consists in adopting the same MC simulations strategy by increasing the maturity each time while fixing all the inputs. The results thus obtained show that the relative errors do not exceed $1.5\%$, which means that our analytical approximations remain both stable and efficient for long-term contracts.
\begin{table}[H]
\begin{center}
\caption{Relative errors when maturity is longer}
\label{tab3}
\begin{tabular}{llllll} 
\hline
Maturity $T$\,\,\,  & 2\,\,\,  & 3\,\,\, & 4\,\,\,  & 5\,\,\,  & 6\,\,\, \\ 
\hline
R.E. with approx. formula (\ref{T1})\,\,\, & 0.8260\%\,\,\,  & 1.2073\%\,\,\,  & 1.4955\%\,\,\,  & 0.4393\,\,\,   &1.0083\%\,\,\, \\
R.E. with approx. formula (\ref{T2}) & 0.4017\% & 0.9485\% & 1.2925\% & 0.7768 & 1.2080\%\\ 
\hline
\multicolumn{6}{@{}p{145mm}}{\footnotesize{Notes: The maturities are taken in years (first row), we consider contracts with a lifetime ranging from 2 to 6 years. For formula  \eqref{T1} (second row), the input parameters are: $r=0.05$, $\sigma=0.3$, $S_0=75$, $B=79$ and $K=80$.  For formula  \eqref{T2} (third row), the input parameters are: $r=0.05$, $\sigma=0.3$,  $S_0=55$, $B=58$ and $K=56$.} }
\end{tabular}
\end{center}
\end{table}
In \reffig{figure2}, we compare the price of the GIC to the AIC. As in \citet{MichelJac1}, we use the log-normal approximation method to estimate the price in the arithmetic case.  For the geometric case, we use our analytical approximation formulas \eqref{T1} and \eqref{T2}. The numerical results show that a GIC is relatively cheaper than a AIC. We also observe, on the left side of \reffig{figure2}, that the price of the Istanbul call option rises when the barrier is close to the current price for both types of averages. This observation is explained by the fact that the closer the barrier is to the current price, the higher the probability that it will be reached, thus increasing the theoretical value of the option. Furthermore, on the right side of \reffig{figure2}, we can see that the price of the Istanbul call option decreases as the strike price moves away from the current price, this is due to the fact that the probability of the option expire in-the-money becomes progressively lower as the strike price becomes higher than the current price.
\begin{figure}[H]
 \begin{center}
   \includegraphics[width=17cm,  height=6.5cm]{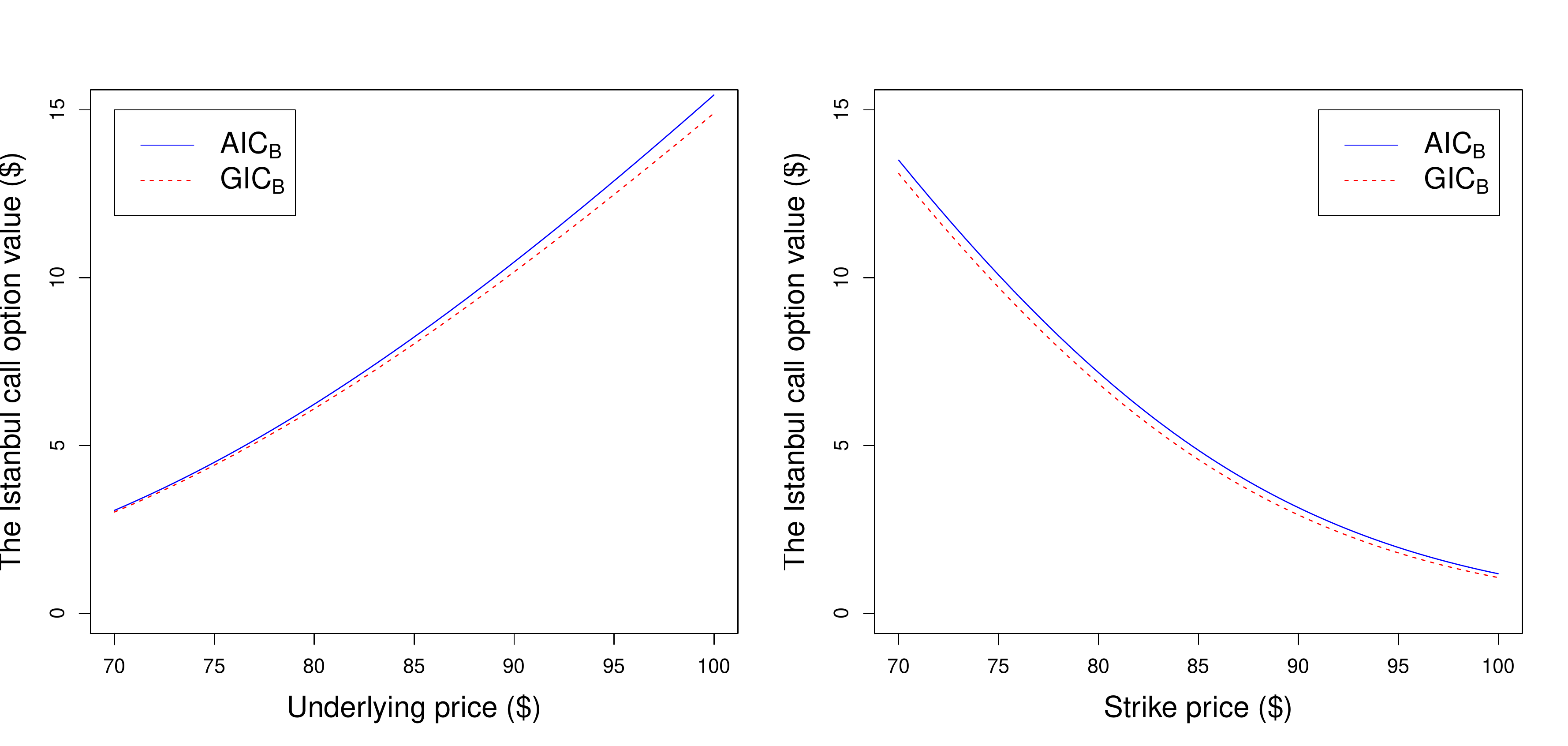}\\
  \caption{
Call price comparison between the geometric and arithmetic Istanbul options. Notes: The Istanbul call option prices are noted $GIC_B$ and $AIC_B$ for the geometric and arithmetic average cases, respectively. We take as input parameters for the left-hand plot: $S_0$ from 70 to 100, $\sigma=0.3$, $r=0.05$, $T=1$, $B=105$ and $K=90$. For the right-hand plot, we consider the input parameters: $K$ from $70$ to $100$, $\sigma=0.3$, $r=0.05$, $T=1$, $B=85$ and $S_0=79$. }\label{figure2}
 \end{center}
\end{figure}
We end this section with a study on the sensitivity of the price of an GIC to changes in the price of the underlying asset. For this purpose, we analyze an important risk measure which is the Delta ($\Delta$). This theoretical quantity is used by options traders to develop good investment strategies.\footnote{The Delta is considered by some traders as an approximation of the probability that the option will expire in-the-money.} In our case, the $\Delta$ of a geometric Istanbul call option corresponds to the partial derivative of \eqref{QQ1} with respect to $S_0$. In \reffig{figure3}, on the left side, we calculate the $\Delta$ values relative to the price of the underlying asset while increasing the volatility at each plot. On the right side, we fixed the underlying asset and calculate the $\Delta$ values relative to the strike price while increasing the maturity at each plot. As shown in Figure 3, the value of $\Delta$ depends on three main factors: moneyness, volatility, and maturity. It should be noted that $\Delta$ is constantly changing during the trading period and therefore does not predict the maturity value of the underlying asset price.
\begin{figure}[H]
 \begin{center}
   \includegraphics[width=17cm,  height=6.5cm]{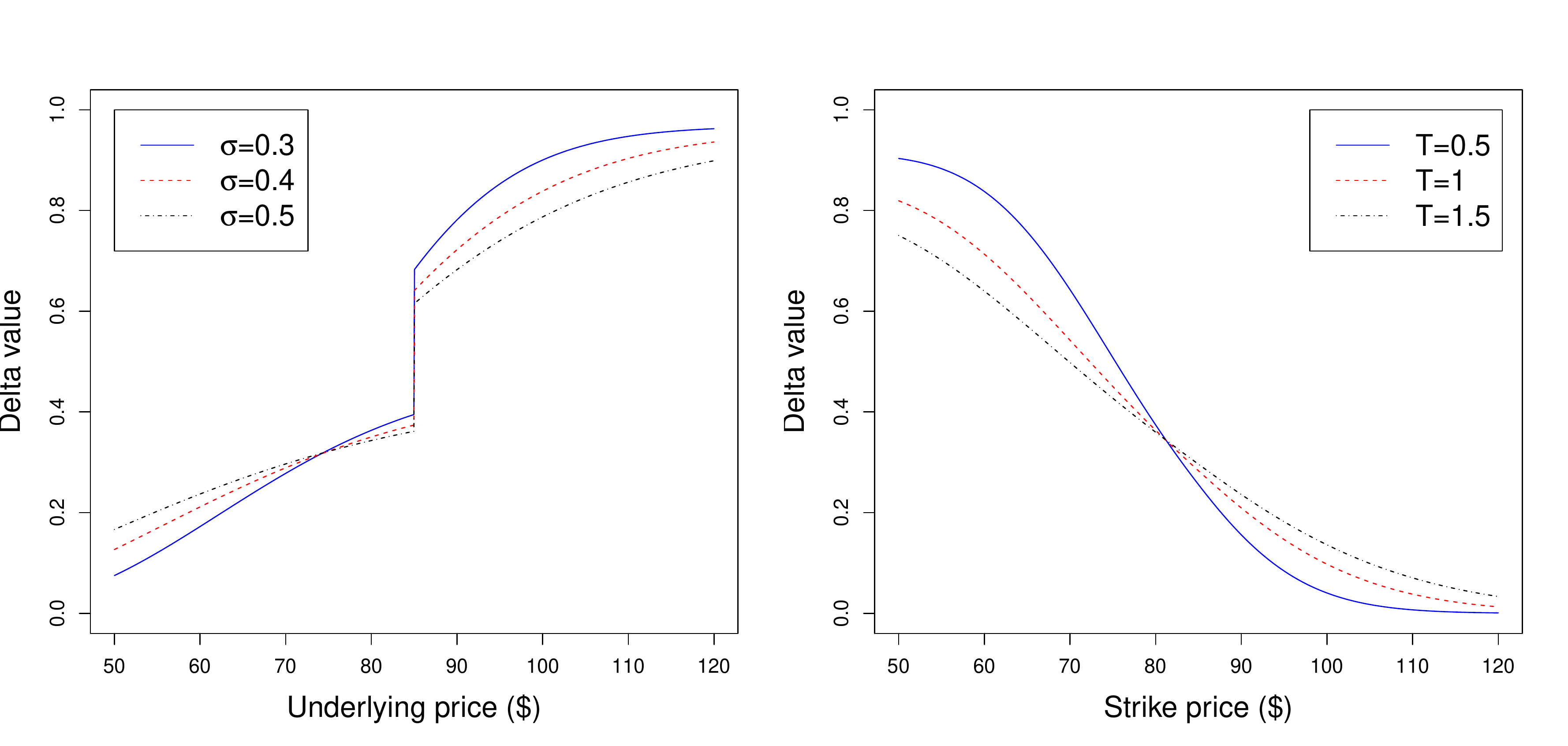}\\
  \caption{
The Delta of an geometric Istanbul call option. Notes: The left-hand plots are constructed with the following input parameters: $r=0.05$, $T=1$, $B=85$ and $K=80$. In the right-hand plots, we take: $r=0.05$, $\sigma=0.3$, $B=85$ and $S_0=80$. The value of $\Delta$ is approximated numerically by first-order finite difference.}\label{figure3}
 \end{center}
\end{figure}

\section{Conclusion}\label{section5}
In this paper, we addressed the pricing problem of geometric Istanbul options under the standard Black-Scholes model. A closed-form analytical approximation formula has been proposed for the price of a call option with a fixed strike price. The numerical results obtained by Monte-Carlo simulations using the control variates method have shown that our analytical approximation is very efficient for a wide range of input parameters and can therefore be used in finance.  In addition, we have shown through a comparative study that geometric Istanbul call options have a more attractive price compared to those with an arithmetic average treated by Michel Jacques in 1997. Finally, we illustrated, graphically, the price sensitivity of a geometric Istanbul call option to changes in the price of the underlying asset.

Future research on Istanbul options could follow two directions. The first would be to make changes to the input parameters, such as studying the case of a floating strike price, the adoption of a down-barrier or studying the case of a harmonic average.  The second interesting approach would be to extend the concept of Istanbul options to more complex economic models such as the exponential Levy model, the CEV model, the Heston model, etc.
\appendix
\section{Some formulas of indefinite integrals of Gaussian functions}
\begin{align}
\displaystyle{\int}\frac{1}{\sqrt{x^3}}\exp   \left( -ax -\frac{h}{2x}\right)  \textrm{d}x&=\sqrt{\frac{2\pi}{h}}\exp\left( -\sqrt{2ah}\right)\Phi\left( \sqrt{2ax}-\sqrt{\frac{h}{x}}\right)\notag\\&\quad
+\sqrt{\frac{2\pi}{h}}\exp\left( \sqrt{2ah}\right)\Phi\left(- \sqrt{2ax}-\sqrt{\frac{h}{x}}\right), 
\label{A1.}
\end{align}
where $a\geqslant0$ and $h> 0$.
\begin{align}
\begin{split}
\displaystyle{\int}\Phi\left(ax+h \right)\textrm{d}x=\left(x+\frac{h}{a}\right)\Phi\left(ax+h \right)+\frac{1}{a}\phi\left(ax+h \right),
\end{split}
\label{A2.}
\end{align} 
where $a\neq 0$.
\begin{align}
\displaystyle{\int}e^{cx}\left( dx^2+k \right)\left(1-\Phi\left(ax+h \right) \right)\textrm{d}x&=
\left(\frac{d}{c}x^2-\frac{2d}{c^2}x+\frac{2d}{c^3}+\frac{k}{c}\right)\left(1-\Phi\left(ax+h \right)\right)e^{cx}
-\exp\left( \frac{c^2}{2a^2}-\frac{hc}{a}\right)\notag\\&\quad \times\left(\frac{2hd}{a^3}+\frac{d(1-h^2)}{ca^2}-\frac{2d}{c^3}-\frac{2hd}{c^2 a}-\frac{dc}{a^4}-\frac{k}{c} \right) \Phi\left( ax+h-\frac{c}{a}\right)\notag\\&\quad
-\exp\left( \frac{c^2}{2a^2}-\frac{hc}{a}\right)\left(\frac{d}{ca}x-\frac{2d}{c^2 a}-\frac{hd}{c a^2}+\frac{d}{a^3}\right)\phi\left( ax+h-\frac{c}{a}\right), 
\label{A3.}
\end{align}
where $a,c\neq 0$.
\begin{align}
\displaystyle{\int}e^{cx}\left( wx+l \right)\phi\left( ax+h\right)\textrm{d}x=
\exp\left( \frac{c^2}{2a^2}-\frac{hc}{a}\right)\left(\left(\frac{wc}{a^3}-\frac{wh}{a^2}+\frac{l}{a}\right)\Phi\left( ax+h-\frac{c}{a}\right)-\frac{w}{a^2}\phi\left( ax+h-\frac{c}{a}\right)\right),
\label{A4.}
\end{align}
where $a\neq 0$.
\begin{proof}The formulas \eqref{A1.}, \eqref{A2.}, \eqref{A3.} and \eqref{A4.} can be proved by differentiation with respect to $x$.
\end{proof}
\section{A closed-form expressions of two integrals used in this article}
\begin{equation}
\displaystyle{\int_{0}^{T} }\frac{1}{\sqrt{(T-t)t}}\exp   \left( -\frac{\alpha^2}{2(T-t)}\right)  \textrm{d}t=2\pi\left(1-\Phi\left( \frac{\alpha}{\sqrt{T}}\right) \right),
\label{B.1}
\end{equation}
where $\alpha \geqslant 0$ and $T>0$.
\begin{equation}
\displaystyle{\int_{0}^{T} }\frac{t}{\sqrt{(T-t)t}}\exp   \left( -\frac{\alpha^2}{2t(T-t)}\right)  \textrm{d}t=T\pi \left(1-\Phi\left( \frac{2\alpha}{\sqrt{T}}\right)\right),
\label{B.2}
\end{equation}
where $\alpha \geqslant 0$ and $T>0$.
\begin{proof}
The formulas \eqref{B.1} and \eqref{B.2} can be easily found by using the convolution theorem and applying the table of Laplace transforms available in \citet{poularikas1998handbook}.
\end{proof}
\section{Proof of \reflem{lem1}}\label{AppendixC}
By a second-order Taylor series expansion around zero, we have
\begin{equation}
e^{-\beta t}=1-t \beta +\frac{t^2}{2}\beta^2+O(\beta^3),  
\label{QQ5}
\end{equation}
from equation \eqref{QQ5}, we have
\begin{equation}
\frac{1}{\pi} \displaystyle{\int_{0}^{T} }\frac{1}{\sqrt{(T-t)t^3}}\exp   \left( -\frac{\alpha^2}{2(T-t)} -\beta t-\frac{\gamma}{t}\right)  \textrm{d}t=A(\alpha,\gamma,T)- B(\alpha,\gamma,T)\beta+C(\alpha,\gamma,T)\frac{\beta^2}{2}+O(\beta^3),
\label{QQ1.2}
\end{equation}
where
\begin{align}
A(\alpha,\gamma,T)=\displaystyle{\int_{0}^{T} }\frac{1}{\sqrt{(T-t)t^3}}\exp   \left( -\frac{\alpha^2}{2(T-t)} -\frac{\gamma}{t}\right)\!\textrm{d}t,\!\!\!
\label{QQ6}\\ 
B(\alpha,\gamma,T)=\displaystyle{\int_{0}^{T} }\frac{1}{\sqrt{(T-t)t}}\exp   \left( -\frac{\alpha^2}{2(T-t)} -\frac{\gamma}{t}\right)\!\textrm{d}t,
\label{QQ7}\\
C(\alpha,\gamma,T)=\displaystyle{\int_{0}^{T} }\frac{t}{\sqrt{(T-t)t}}\exp   \left( -\frac{\alpha^2}{2(T-t)} -\frac{\gamma}{t}\right)\!\textrm{d}t.
\label{QQ8}
\end{align}
We start with the evaluation of \eqref{QQ6}, which can also be written as
\begin{equation}
 A(\alpha,\gamma,T)=\frac{1}{T}\exp\left( -\frac{\alpha^2+2\gamma}{2T} \right) \displaystyle{\int_{0}^{+\infty} }\frac{1}{\sqrt{x^3}}\exp   \left( -\frac{\alpha^2}{2T}x -\frac{\gamma}{Tx}\right)  \textrm{d}x,
\end{equation}
thanks to the formula \eqref{A1.}, we obtain  
\begin{equation*}
 A(\alpha,\gamma,T)=\pi\sqrt{\frac{2}{T\gamma}}\phi\left( \frac{\alpha+\sqrt{2\gamma}}{\sqrt{T}}\right). 
\end{equation*}
For formula \eqref{QQ7}, we start with the following relationship
\begin{equation}
\frac{\partial }{\partial \gamma}B(\alpha,\gamma,T)=-A(\alpha,\gamma,T),
\end{equation} 
by integration with respect to $\gamma$,  we obtain
\begin{equation}
B(\alpha,\gamma,T)=-2\pi\Phi\left( \frac{\alpha+\sqrt{2\gamma}}{\sqrt{T}}\right) +\varepsilon_{1}(\alpha,T),
\label{QQ9}
\end{equation}
where $\varepsilon_{1}(\alpha,T)$ is a function independent of $\gamma$.
The limit of \eqref{QQ9} when $\gamma$ tends to zero gives
\begin{equation}
\displaystyle{\int_{0}^{T} }\frac{1}{\sqrt{(T-t)t}}\exp   \left( -\frac{\alpha^2}{2(T-t)}\right)  \textrm{d}t=-2\pi\Phi\left( \frac{\alpha}{\sqrt{T}}\right) +\varepsilon_{1}(\alpha,T),
\label{QQ10}
\end{equation}
using the formula \eqref{B.1}, we get $\varepsilon_{1}(\alpha,T)=2\pi$ and therefore 
\begin{equation*}
B(\alpha,\gamma,T)=2\pi\left( 1-\Phi\left( \frac{\alpha+\sqrt{2\gamma}}{\sqrt{T}}\right) \right).  
\end{equation*}
Similarly, to obtain a closed-form expression of \eqref{QQ8}, we note that 
\begin{equation}
\frac{\partial }{\partial \gamma}C(\alpha,\gamma,T)=-B(\alpha,\gamma,T).
\end{equation}
The integration with respect to $\gamma$ taking $x=\sqrt{\gamma}$ and using formula \eqref{A2.} gives  
\begin{equation}
\begin{split}
C(\alpha,\gamma,T)=\pi \left( \left( 2\gamma-\alpha^2-T\right) \Phi \left( \frac{\alpha+\sqrt{2\gamma}}{\sqrt{T}}\right) + \sqrt{T}\left(\sqrt{2\gamma}-\alpha\right)  \phi\left( \frac{\alpha+\sqrt{2\gamma}}{\sqrt{T}}\right)\right) -2\pi\gamma +\varepsilon_{2}(\alpha,T),
\end{split}
\label{QQ11}
\end{equation}
where $\varepsilon_{2}(\alpha,T)$ is a function that depends only on  $\alpha$ and $T$.
The limit of \eqref{QQ11} when $\gamma$ tends to $\alpha^2/2$ gives
\begin{equation}
\displaystyle{\int_{0}^{T} }\frac{t}{\sqrt{(T-t)t}}\exp   \left( -\frac{\alpha^2 T}{2t(T-t)}\right)  \textrm{d}t=-T\pi\Phi\left( \frac{2\alpha}{\sqrt{T}}\right) -\alpha^2\pi+\varepsilon_{2}(\alpha,T).
\label{QQ12}
\end{equation}
Using the formula \eqref{B.2}, we get $\varepsilon_{2}(\alpha,T)=\pi\left( T+\alpha^2\right) $ and therefore 
\begin{equation*}
C(\alpha,\gamma,T)=\pi \left( \left( 2\gamma-\alpha^2-T\right) \Phi \left( \frac{\alpha+\sqrt{2\gamma}}{\sqrt{T}}\right) + \sqrt{T}\left(\sqrt{2\gamma}-\alpha\right)  \phi\left( \frac{\alpha+\sqrt{2\gamma}}{\sqrt{T}}\right)-2\gamma+T+\alpha^2\right).
\end{equation*}
Finally, it is sufficient to replace the formulas \eqref{QQ6}, \eqref{QQ7} and \eqref{QQ8} in \eqref{QQ1.2} to obtain the desired result.
\bibliographystyle{elsarticle-harv}
\bibliography{bibliog}
\end{document}